\def\ps@headings{%
\def\@oddhead{\mbox{}\scriptsize\rightmark \hfil \thepage}%
\def\@evenhead{\scriptsize\thepage \hfil \leftmark\mbox{}}%
\def\@oddfoot{}%
\def\@evenfoot{}}
\newcommand{\ora}{\overrightarrow}
\newcommand{\E}{\ora{E}}
\newcommand{\ve}{\varepsilon}
\newcommand{\mce}{\mathcal{E}}
\newcommand{\mcm}{\mathcal{M}}
\newcommand{\mcr}{\mathbf{r}}
\newcommand{\mcp}{\mathbf{p_b}}
\newcommand{\mcc}{\mathcal{C}}
\newcommand{\ol}{\overline}
\newtheorem{lem}{Lemma}
\title{Study of Throughput and Latency in Finite-buffer Coded Networks}
\author{Nima Torabkhani$^\dag$,
     Badri N. Vellambi$^{\dag\ddag}$
     Faramarz Fekri$^\dag$\\
     $^\dag$ School of Electrical and Computer Engineering,
    Georgia Institute of Technology\\
     $^{\dag\ddag}$ Institute for Telecommunications Research,
    University of South Australia\\
     E-mail: \{nima, fekri\}@ece.gatech.edu, badri.vellambi@unisa.edu.au}
\begin{document}
\maketitle
\begin{abstract}
Exact queueing analysis of erasure networks with network coding in a finite buffer regime is an extremely hard problem due to the large number of states in the network. In such networks, packets are lost due to either link erasures or due to blocking due to full buffers. In this paper, a block-by-block random linear network coding scheme with feedback on the links is selected for reliability and more importantly guaranteed decoding of each block. We propose a novel method that iteratively estimates the performance parameters of the network and more importantly reduces the computational complexity compared to the exact analysis. The proposed framework yields an accurate estimate of the distribution of buffer occupancies at the intermediate nodes using which we obtain analytical expressions for network throughput and delay distribution of a block of packets.
\end{abstract}

\section{Introduction}\label{FB-Intro}

In networks, packets transfer through a series of intermediate nodes (routers). Hence, the packets may have to be stored at intermediate nodes for transmission at a later time, and often times, buffers are limited in size. Although a large buffer size is usually affordable and preferred to minimize packet drops, large buffers have an adverse effect on the latency of the network. Further, using larger buffer sizes at intermediate nodes will cause practical problems such as on-chip board space and increased memory-access latency.

The problem of buffer sizing and congestion control is of paramount interest to router design engineers. Typical routers today route several tens of gigabits of data each second \cite{FlowRef}. Realistic studies have shown that, at times, Internet routers handle about ten thousand independent streams/flows of data packets. With a reasonable buffer size of few Gigabytes of data, each stream can only be allocated a few tens of data packets. Therefore, at times when long parallel flows congest a router, the effects of such a small buffer space per flow come to play. This work aims at providing a framework to analyze the performance of single information flow in finite-buffer line networks and investigates the trade-offs between throughput, delay and buffer size.

The problem of computing throughput capacity and designing efficient coding schemes for lossy networks has been widely studied \cite{AmirDana:capacity, PakzadF05, YeungNCJrnl}. However, the study of capacity of networks with finite buffer sizes has been limited \cite{VelambiITW, VelambiITA, Torabkhani2010, NetCod:Lun}. This can be attributed solely to the fact that finite buffer systems are analytically hard to track. In \cite{VelambiITW, VelambiITA}, it was shown that min-cut capacity cannot be achieved due to the limited buffer constraint in a line network. They also proposed upper and lower bounds for the throughput capacity of wireline networks. Recently in \cite{Torabkhani2010}, authors presented an iterative estimation method to analyze the performance of a random routing scheme in general wired networks. Here, we will exploit similar ideas (on iterative estimation) to analyze the performance of block-based network coding scheme,\footnote{In the network coding literature, it is also referred as generation-based scheme, where each intermediate node combines only those packets that from the same generation.} which is a combination of both random linear coding and an ideal feedback. This is partly motivated by the feedback-based network coding scheme in\cite{ARQNC, FeedbackNC}. Further, the problem of finding the packet delay has been visited in queueing theory literature on the behavior of open tandem queues, which are analogous to line networks \cite{Tayfur_QT1, Tayfur_QT2}. However, the approaches therein cannot be used to study the performance of the random linear coding (RLC) scheme.

This work is motivated by the fact that the RLC scheme for finite-buffer networks introduced in \cite{NetCod:Lun} has the limitation that it cannot be used to characterize the latency profile. This is because, the typical notion of latency is not meaningful for the RLC scheme. Since latency is critical for real-time applications (such as video streaming), a block-by-block encoding of the stream is required. Hence, we introduce our \emph{Block-based Random Linear Coding} scheme, which applies RLC on each individual block\footnote{Just as in any network coding scheme, the packets received by the destination are linear combinations of the original data in a block. Hence, with the knowledge of this linear transformation at the decoder, inversion can be performed to recover the data block.}. We will see that our approach guarantees a certain decoding delay while this is not the case in RLC, which is a rate-optimal scheme with large decoding delay. Further, by using RLC, only one feedback is required for transmission of $K$ packets which considerably limits the average number of feedbacks per transmission.

This paper is organized as follows. First, we present a formal definition of the problem and the network model in Section~\ref{FB-sec1}. Next, we define the network states and the need for an approximation of the exact problem. We investigate the tools and steps for finite-buffer analysis in Section~\ref{FB-sec2}. We then obtain expressions for throughput and probability distribution of delay in Section~\ref{FB-sec3}. Finally, Section~\ref{FB-sec4} presents our analytical results compared to the simulations.

\section{Problem Statement and Network Model}\label{FB-sec1}

We consider line networks for our study. As illustrated in Fig.~\ref{FB_Fig1}, a line network is a directed graph of $h$ hops with the vertex set $V=\{v_0,v_1,\ldots,v_h\}$ and the edge set $\E=\{(v_i,v_{i+1}): i=0,\ldots,h-1\}$ for some integer $h\geq 2$. In the figure, the intermediate nodes are shown by black ovals. The links are assumed to be unidirectional, memoryless and lossy. We let $\ve_i$ denote the packet erasure probability over the link $(v_{i-1},v_i)$. The erasures model only the quality of links (e.g., presence of noise, interference) and do not represent packet drops due to finite buffers. Each node $v_i \in V$ has a buffer size of $m_i$ packets. It is assumed the destination node has no buffer constraints and that the source node has infinitely many innovative packets\footnote{A received packet is called \emph{innovative} with respect to a node if the packet cannot be generated by a linear combination of the current buffer contents of the node.}.
\begin{figure}[h]
\psfrag{e0}{$\ve_1$} \psfrag{e1}{$\ve_2$} \psfrag{e2}{$\ve_h$} \psfrag{m1}{\hspace{-3mm}$m_1$} \psfrag{m2}{\hspace{-3mm}$m_2$} \psfrag{m3}{\hspace{-5mm}$m_{h-1}$}
\psfrag{v0}{$v_0$} \psfrag{v1}{$v_1$} \psfrag{v2}{$v_2$} \psfrag{v3}{$v_{h-1}$} \psfrag{v4}{$v_h$}
\centering
\includegraphics[width=2.25in, height=1in,angle=0]{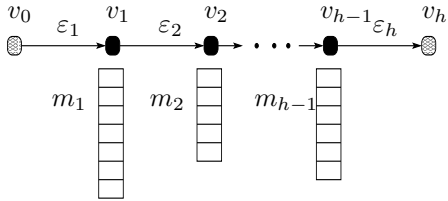}
\caption{An illustration of the line network.}
 \label{FB_Fig1}
\end{figure}
The system is analyzed using a discrete-time model, where each node transmits one packet over a link per epoch.
The unicast capacity between a pair of nodes is defined to be the supremum of all achievable rates of transmission of information packets (in packets per epoch) between the pair of nodes over all possible means used for packet generation and buffer update at intermediate nodes. The capacity is discussed thoroughly by the authors in \cite{Vellambi2010}. However, in this work we are not aiming to analyze the fundamental capacity of the network using rate optimal schemes. A practical network coding scheme, which is well-suited for transmitting real-time data streams in a block-by-block fashion using RLC and feedback, is introduced. The proposed scheme guarantees certain decoding delay, which is not the case in the RLC.

In the proposed practical scheme, the source node takes the stream of packets and divides them into blocks of $K$ packets each. The buffer of each intermediate node $v_i \in V$ is then segmented into $M_i$ blocks. In other words, we have $m_i=M_iK$. Each block is then served using RLC over all the packets in the block. The blocks are served based on a \emph{first-come first-serve} policy. An instant lossless hop-by-hop acknowledgment per block is also employed to indicate the successful receipt of a complete block of $K$ packets. In each epoch, one or multiple of the following events occur in different orders:
\begin{itemize}
\item[1.] If a node neither receives any innovative packet nor conveys any innovative packet to the next node, then the content of its buffer does not change\footnote{In this work, by \emph{conveying a packet}, we mean that \emph{the packet is successfully transmitted and stored at the next-hop node}.}.
\item[2.] Upon receiving an innovative packet, it will be stored in the last available block (a block of memory with less than $K$ innovative packets stored in it). The packets of this block will be served after all the previously received blocks in the buffer are completely served.
\item[3.] In each epoch, every node transmits a packet formed by the RLC encoding over its current block (oldest block in the queue), until the node receives an acknowledgment indicating that the block is fully conveyed to the next-hop node. The block will then be removed from the buffer and the next block in the queue will be served. This also implies that free space in the buffer will be increased by $K$ packets.
\end{itemize}

To implement the per-block feedback mechanism, a node must distinguish innovative packets upon their reception. One way is to compute the rank of the received packets in a block. A more practical protocol is to use a variable CMB in the header of each encoded packet, indicating the number of innovative packets used by RLC to form the packet. Further, every node maintains a counter INV indicating the number of innovative packets received in its current block so far. Every node sets INV = 0 for each new incoming block\footnote{If INV = $K$, then the counter is reset to 0 and an acknowledgment is sent to the previous node.} and increments it by one for each incoming packet whose CMB is greater than INV of the receiving node. Note that, if the current block in the queue of a node has $K$ innovative packets then CMB is equal to $K$ for all the packets to be transmitted by that node.

In this work, we will employ the following notations. For any $x\in[0,1]$, $\ol{x}\triangleq 1-x$. The convolution operator is denoted by $\otimes$ and $\otimes^{l} f$ is used as a shorthand for the $l$-fold convolution of $f$ with itself. For $0<\lambda<1$, $\mathbbm{G}(\lambda)$ denotes the probability mass function of a random variable that is geometric with mean $\frac{1}{1-\lambda}$.

\section{Exact analysis and Network States}\label{FB-core}

In \cite{VelambiITW}, a Markov-chain approach for exact analysis of a finite-buffer line network identifies the throughput as equivalent to the problem of finding the buffer occupancy distribution of the intermediate nodes. However, the size of the Exact Markov Chain (EMC) and the multiple reflections due to the finiteness of buffers at each intermediate node render this problem mathematically intractable for even networks of small hop-lengths and buffer sizes. We therefore aim to approximate the distribution of buffer occupancies.

To approach this approximation problem properly, it is necessary to clearly define the buffer states in a manner that (a) an irreducible ergodic Markov chain is obtained, and (b) the steady-state distribution of the chain allows tractable expressions for the performance parameters of the network. Thus, a proper definition for the buffer states cannot be proposed unless the communication scheme is known. For the scheme in Sec.~\ref{FB-sec1}, two variables are needed to track all the buffer states of a node. Let $s$ be the total number of innovative (w.r.t. the next-hop node) packets stored at a node. Denote $t$ to be the number of successfully conveyed innovative packets by the node from the current block. Then, the pair $(s, t)$ can be defined as the state of the buffer of the node. Note that $s$ is the minimum number of packets that a node has to store in its buffer. Also note that, since a new block starts to be served after the $K^{\textrm{th}}$ packet of the current block, $t\in\{0,\ldots,K-1\}$. As an example, assuming that node $v_i$ is in state $(s,t)$ at the start of the epoch, given that during the epoch it only sends a packet successfully but does not receive any packets, the state of the network will change to $(s,t+1)$ if $t=\{0, 1, \ldots, K-2\}$ or it will change to $(s-K,0)$ if $t=K-1$.

\section{A Framework for Finite-Buffer Analysis: Markov Chain Modeling}\label{FB-sec2}
In this section, we aim to determine the distribution of buffer occupancy of an intermediate node, which will later be used to analyze network parameters such as throughput and latency of a block.

Due to the discrete-time nature of the analysis framework, two Markov chains need to be constructed for each intermediate node. The first one considers the buffer occupancy at instants when a packet has just been transmitted (either successfully or unsuccessfully), which is called \emph{receive-first Markov chain} (RFMC). This is required to compute the probability of blocking, which is caused when the state of a node is forced to remain unchanged because the transmitted packet was successfully delivered to the next-hop node, but the latter does not store the packet due to full buffer occupancy . The second one considers the buffer occupancy at instants when a packet has just been received/stored, which is called \emph{transmit-first Markov chain} (TFMC). This will be used to calculate the incoming rate of innovative packets at each node.

Note that the problem of exactly identifying the steady-state probabilities of the RFMC and TFMC suffers the same difficulties as identifying that of the EMC~\cite{Vellambi2010}. The finite buffer condition introduces a strong dependency of state update at a node on the state of the node that is downstream. To develop an estimation scheme that considers blocking, we make the following assumptions.
\begin{itemize}
\item[1.] Packets are ejected from nodes in a memoryless fashion. This assumption allows us to keep track of only the information rate.
\item[2.] The blocking event occurs independent of the state of a node. This allows us to track just the blocking probability.
\item[3.] At any epoch, given the occupancy of a particular node, packet arrival and blocking events are independent of each other.
\end{itemize}

These assumptions spread the effect of blocking equally over all non-zero states of occupancy at each node. Now, given that the arrival rate of innovative packets at node $v_i$ is $r_i$ packets/epoch, and that the probability of the next node being full, ({i.e.}, $s=m_{i+1}$ for node $v_{i+1}$) is ${p_b}_{i+1}$, we can show that transition dynamics of the state change for node $v_i$ is given by the Markov chain depicted in Fig.~\ref{TFMC} for both TFMC and RFMC \footnote{Self-loops are not demonstrated in the figure.}.
\begin{figure}[ht!]
\centering
\includegraphics[height=3.5in,angle=-90]{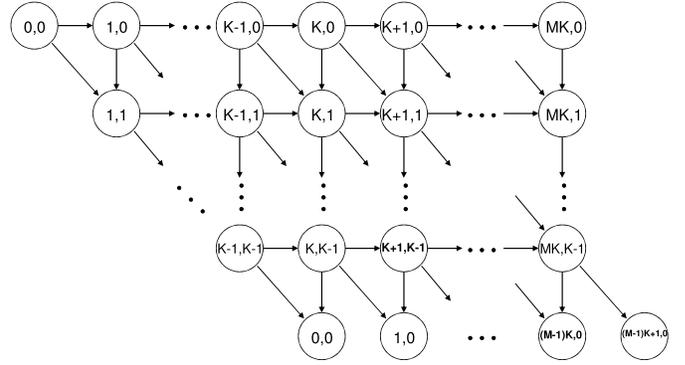}
\caption{The general structure for both TFMC and RFMC for a node with buffer size $m=MK$.}\label{TFMC}
\end{figure}
Also, obtaining the state transition probabilities is straightforward using $r_i$, ${p_b}_{i+1}$, $\ve_i$ and $\ve_{i+1}$. As an example, for TFMC we have the following
\begin{equation}
P^{TF}_{(s,t) \to (s+1,t)} =
\left\{\hspace{-1.5mm}
\begin{array}{ll}
r_i & s=t \\
r_i (\ve_{i+1} + \ol\ve_{i+1} {p_b}_{i+1} ) & s-1 \geq t\\
0 & \text{Otherwise}
\end{array}
\right..
\label{Formula_TF}
\end{equation}

Note that we notate $s$ and $t$ instead of $s_i$ and $t_i$ for the simplicity of notation when considering node $v_i$. The same transition probability for RFMC is different from TFMC and is given by 
\begin{equation}
P^{RF}_{(s,t) \to (s+1,t)} =
r_i (\ve_{i+1} + \ol\ve_{i+1} {p_b}_{i+1} ).
\label{Formula_RF}
\end{equation}

Note that, both (\ref{Formula_TF}) and (\ref{Formula_RF}) are valid for $s=\{0, 1, \ldots, M_iK-1\}$ and $t=\{0, 1, \ldots, K-1\}$.

For all input parameters, the Markov chains can be shown to be aperiodic, irreducible and ergodic. Therefore, it possesses a unique steady-state distribution. The steady state probability of node $v_i$ being in state $(s,t)$, is denoted by $P_{i}^{\emph{RF}}(s,t)$ and $P_{i}^{\emph{TF}}(s,t)$ for RFMC and TFMC, respectively.

The blocking probability that the node $v_{i-1}$ perceives from the node $v_i$ is the same as the probability of $v_i$ being full ($s=M_iK$) at the instant when a packet is transmitted successfully by the previous node. Hence, this probability have to be calculated using the steady state probability distribution of RFMC as follows
\begin{equation}
{p_b}_i=\sum_{t=0}^{K-1} P_{i}^{\emph{RF}}(M_iK,t).
\label{P_block}
\end{equation}

Similarly, the steady state probability distribution of TFMC can be used to compute the arrival rate at the next node using
\begin{equation}
r_{i+1}=(1-\sum_{t=0}^{K-1} P_{i}^{\emph{TF}}(t,t))\ol\ve_{i+1}.
\label{arrival_rate}
\end{equation}

Note that if a node is in the state $(t,t)$ ($t \in \{0, \ldots, K-1\}$), it means that it has stored $t$ innovative packets from the current block so far and it has also sent $t$ linear combinations of them successfully. Therefore, there is no more innovative packets to send.

Finally, Given two vectors $\mcr=(r_1,\ldots,r_h)\in[0,1]^{h}$ and $\mcp=({p_b}_1,\ldots,{p_b}_h)\in[0,1]^{h}$, we term $(\mcr,\mcp)$ as an approximate solution to the exact problem, if they satisfy (\ref{P_block}) and (\ref{arrival_rate}) in addition to having $r_1=\ol\ve_1$ and ${p_b}_h=0$. Fortunately, the result from our previous work on the capacity of line networks in~\cite{Vellambi2010} can be applied to this setup as well and guarantees both the uniqueness and an algorithm for identifying the approximate solution.

The approximate solution is obtained iteratively by the following procedure:

\begin{itemize}
\item[1.] Initialization: $\mcr=(\overline\ve_1,\ldots,\overline\ve_h)$ and $\mcp=(0,\ldots,0)$

\item[2.] Construct TFMC and RFMC respectively and compute their steady-state probabilities.

\item[3.] Using $P_{i}^{\emph{RF}}(s,t)$ and $P_{i}^{\emph{TF}}(s,t)$ (obtained from step 2) calculate the new values for $\mcr$ and $\mcp$ by (\ref{P_block}) and (\ref{arrival_rate}), respectively, for $i=1,2, \ldots, h-1$  and auxiliary equations $r_1=1-\ve_1$ and ${p_b}_h=0$.

\item[4.] Repeat steps 2 and 3 until all the distributions converge.
\end{itemize}

\section{Computation of Network Parameters}\label{FB-sec3}

In this section, we exploit the results of the iterative estimation of buffer occupancy distributions in Sec.~\ref{FB-sec2} to obtain analytical expressions for both network throughput and delay distribution of a block. The \emph{Block Delay} is defined as the time taken for a block of $K$ information packets (at the source) to be transferred through a line network from the instant when the first packet of that block is transmitted from the source node to the instant when the $K^{\textrm{th}}$ innovative packet of that block is received by the destination node ({i.e.}, the block can be decoded at the destination).

Given a line network with link erasures $\mce=(\ve_1,\ldots,\ve_h)$, intermediate node buffer sizes $\mcm=(m_1,\ldots, m_{h-1})$, we can find the approximate solution $(\mathbf{r},\mathbf{p_b})$. Using this, an estimate of the throughput is obtained using the following 
\begin{equation}
\mcc(\mce,\mcm,K)=r_h(1-{p_b}_h)=r_h.
\label{FB-LinCapEst}
\end{equation}

To compute the distribution of the total \emph{block delay}, one can proceed in a hop-by-hop fashion in the following way:
\begin{equation}
\mathbf{D}=\mathbf{T}_{1} \otimes \mathbf{W}_{1} \otimes \mathbf{W}_{2} \otimes \ldots \otimes \mathbf{W}_{h-2} \otimes \mathbf{F},
\label{Delay_dist_formula}
\end{equation}

 where $\mathbf{T}_{1}$ is the probability distribution of the time taken for a packet in the source to be conveyed to node $v_1$. Further, $\mathbf{W}_{i}$ is the probability distribution of the time taken from the instant when node $v_i$ stores the first innovative packet of a block to the instant the first packet of the corresponding block in $v_i$ is conveyed to node $v_{i+1}$. Finally $\mathbf{F}$ is the probability distribution of the time taken for all the $K$ packets of a block in node $v_{h-1}$ to be conveyed to the destination node from the instant when the first innovative packet of the same block is stored in the buffer of node $v_{h-1}$. Thus, using the definition, we have $\mathbf{T}_{1} = \mathbbm{G}(\acute{\ve_1})$, where $\acute{\ve_1}$ is the effective erasure probability (after considering blocking) and is given by
\begin{equation}
\acute{\ve_i}=
\left \{
\begin{array} {ll}
\ve_i + {p_b}_i \ol\ve_i & i=1,2,\ldots,h-1\\
\ve_h & i=h
\end{array}
\right..
\label{eps_eff}
\end{equation}

Also, the average waiting time in node $v_i$ is formulated as
\begin{equation}
\mathbf{W}_i=
\pi_i(0,0)\mathbf{S}_i(0,0) + \sum_{d=1}^{M_i-1} { \sum_{t=0}^{K-1} {\pi_i(dK,t)\mathbf{S}_i(dK,t)} },
\label{W_i}
\end{equation}
where, $\pi_i(s,t)$ is the probability that an arriving packet finds node $v_i$ in state $(s,t)$ given that it is the first packet of its corresponding block. Also, $\mathbf{S}_i(s,t)$ is the probability distribution of the time taken for the first innovative packet of a block in $v_i$ to be conveyed to node $v_{i+1}$ from the instant when the first innovative packet of that block arrives at node $v_i$ and finds its buffer at state $(s,t)$.

If an arriving packet is the first of its corresponding block, it finds the buffer at states of the form $(dK,t)$ where $d$ can take any value between $0$ and $M_i-1$. This is because of the fact that the last block had been completely served before the first packet of the current block arrives. Hence, both  $\mathbf{S}_i(s,t)$ and  $\pi_i(s,t)$ will be 0 if $s$ is not a multiple of $K$. Finally,  $\pi_i(s,t)$ can be formulated as follows for $n\in\{0,1,\ldots,M_i-1\}$,
\begin{equation}
\pi_i(s,t)=
\left \{
\begin{array} {ll}
\frac{P_{i}^{\emph{RF}}(s,t)}{\sum_{d=0}^{M_i-1} {\Phi^{RF}_i(dK)}} & s=nK\\
0 & \text{Otherwise}
\end{array}
\right.,
\label{pktsee}
\end{equation}
where $\Phi^{\emph{RF}}_{i}(s)$ is the marginal probability distribution of $s$ for an arriving packet (\emph{i.e.}, The probability that an arriving packet finds the buffer of node $v_i$ in the states of the form $(s,.)$). $\Phi_{i}$ can be computed as follows
\begin{equation}
\Phi_{i}(s)=\sum_{t=0}^{\min\{K-1,s\}} {P_{i}^{\emph{RF}}(s,t)}.
\label{phi_i}
\end{equation}

Also, $\mathbf{S}_i(s,t)$ can be derived using the following relation for $n\in\{0,1,\ldots,M_i-1\}$
\begin{equation}
\mathbf{S}_i(s,t)=
\left \{
\begin{array} {ll}
\mathbbm{G}(\acute{\ve_i}) & s=0 \\
\otimes^{\{(n-1)K + (K+1-t)\}}\mathbbm{G}(\acute{\ve_i}) & s=nK\\
0 & \text{Otherwise}
\end{array}
\right..
\label{Delay_Func_1}
\nonumber
\end{equation}

Finally, $\mathbf{F}$ can be calculated by taking the average over all the conditional delay distributions $\mathbf{L}(s,t)$ as
\begin{equation}
\mathbf{F}=
\pi_{h-1}(0,0)\mathbf{L}(0,0) + \sum_{d=1}^{m_{h-1}-1} { \sum_{t=0}^{K-1} {\pi_{h-1}(dK,t)\mathbf{L}(dK,t)} },
\label{F}
\nonumber
\end{equation}
where $\mathbf{L}(s,t)$ is the probability distribution of the time taken for the whole block to be conveyed to the destination node given that the first packet of that block found the buffer of node $v_{h-1}$ in state $(s,t)$ when arrived. Note that after receiving the first packet of a block, node $v_{h-1}$ has to wait until all its previously stored blocks are conveyed to the destination, during which some of the packets of the corresponding block might have already been arrived. Let $\mathbf{V}(x,y)$ be the probability distribution of the time to convey $y$ innovative packets to the next node when $x$ of those packets ($x \leq y$) has yet to arrive for the same block, knowing that a packet departs with probability $P_{out}$ and an innovative packet arrives with probability $P_{in}$ in each time epoch. Hence, $\mathbf{L}(s,t)$ is derived using the following relation for $n\in\{0,1,\ldots,M_i-1\}$
\begin{equation}
\mathbf{L}(s,t)=
\left \{
\begin{array} {ll}
\mathbf{V}(K-1,K) & s=0 \\
\{\otimes^{\alpha(n,t)}\mathbbm{G}(\ve_h) \}+\mathbf{V}(K-\beta,K) & s=nK\\
0 & \text{otherwise}
\end{array}
\right.,
\label{Delay_Func_2}
\nonumber
\end{equation}
where $\alpha(n,t)=(n-1)K + (K-t)$ is the number of packets that have to leave node $v_{h-1}$ before the next block to be served and $\beta=\min\{K-1,\left\lfloor \frac{\alpha(n,t) P_{in}}{P_{out}}\right\rfloor \}$ is the expected number of packets from the corresponding block that arrived during the time when those $\alpha(n,t)$ packets were being conveyed. Further, $P_{in}=\ol\ve_{h-1}(1-\sum_{t=0}^{K-1}P_{h-1}^{\emph{TF}}(t,t))$ for $h>2$, $P_{in}=\ol\ve_{h-1}$ for $h=2$, and $P_{out}=\ol\ve_h$. $\mathbf{V}(x,y)$ will be determined by the following lemma.

\vspace{2mm}
\begin{lem}\label{FB-lem1}
$\mathbf{V}(x,y)$ (defined for $x \leq y$) is the solution to the following equation,
{\scriptsize{
\begin{equation}
\mathbf{V}(x,y)=\big[\frac{p_1\mathbf{V}(x,y-1)+p_2\mathbf{V}(x-1,y)+p_3\mathbf{V}(x-1,y-1)}{p_1+p_2+p_3}\big]\otimes\mathbbm{G}(p_4)
\nonumber
\end{equation}}}
with boundary conditions:
\begin{equation}
\begin{array} {lcl}
\mathbf{V}(0,y) &=& \otimes^{y}\mathbbm{G}(1-P_{out})\\
\mathbf{V}(x,x) &=& \mathbbm{G}(1-P_{in}) \otimes \mathbf{V}(x-1,x)
\end{array},
\end{equation}

where,
\begin{equation}
\begin{array}{lcl}
p_1 = P_{out}(1-P_{in})&  &p_2 = P_{in}(1-P_{out})\\
p_3 = P_{in}P_{out} &  & p_4 = (1-P_{in})(1-P_{out}).
\end{array}
\end{equation}
\end{lem}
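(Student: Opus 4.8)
The plan is to set up a small discrete-time Markov chain that tracks the pair $(x,y)$, where $y$ is the number of innovative packets from the block still to be conveyed to the next node and $x$ is how many of those have yet to arrive, and then read off the first-passage-time distribution to the absorbing state $y=0$ as a convolution. First I would fix a state $(x,y)$ with $0<x\leq y$ and enumerate the four elementary events in an epoch under the independence assumptions already granted in Sec.~\ref{FB-sec2}: (i) a packet departs but none arrives, with probability $p_1=P_{out}\ol{P_{in}}$, moving us to $(x,y-1)$; (ii) a packet arrives but none departs, with probability $p_2=P_{in}\ol{P_{out}}$, moving us to $(x-1,y)$; (iii) both happen, with probability $p_3=P_{in}P_{out}$, moving us to $(x-1,y-1)$; and (iv) neither happens, the self-loop, with probability $p_4=\ol{P_{in}}\,\ol{P_{out}}$. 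Note $p_1+p_2+p_3+p_4=1$.

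Next I would argue that the time spent before the first non-self-loop transition is geometric with parameter $p_4$, i.e.\ distributed as $\mathbbm{G}(p_4)$, and is independent of which of the three non-self-loop events eventually fires; conditioned on leaving, the event is $(i),(ii),(iii)$ with probabilities $p_1/(p_1+p_2+p_3)$, $p_2/(p_1+p_2+p_3)$, $p_3/(p_1+p_2+p_3)$ respectively. By the strong Markov property, the total first-passage time from $(x,y)$ decomposes as this initial $\mathbbm{G}(p_4)$ sojourn convolved with the (appropriately weighted mixture of the) first-passage times from the three successor states. Since convolution distributes over the mixture, this yields exactly the displayed recursion for $\mathbf{V}(x,y)$.

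For the boundary conditions: when $x=0$ no further arrivals are awaited, so $y$ simply decreases by one each time a packet departs, and the number of epochs to see $y$ departures is the $y$-fold convolution of geometric waiting times, each geometric with mean $1/P_{out}$, i.e.\ $\otimes^{y}\mathbbm{G}(1-P_{out})$ (here the effective "success" per epoch is $P_{out}$, so the $\mathbbm{G}(\cdot)$ parameter is $1-P_{out}$ by the paper's convention). When $x=y$, every packet still awaited has also yet to arrive, so the system is stuck at this level until an arrival occurs; the waiting time for that first arrival is geometric with mean $1/P_{in}$, i.e.\ $\mathbbm{G}(1-P_{in})$, after which the state is $(x-1,x)$, giving $\mathbf{V}(x,x)=\mathbbm{G}(1-P_{in})\otimes\mathbf{V}(x-1,x)$. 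Well-definedness then follows by a straightforward double induction on $x$ and $y$ along the partial order generated by the moves $(x,y)\mapsto(x,y-1),(x-1,y),(x-1,y-1)$, which strictly decreases $x+y$, so every state reaches an absorbing/boundary configuration in finitely many steps.

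The main obstacle I anticipate is not the algebra but justifying that the recursion is legitimate under the scheme's actual dynamics — specifically, that arrivals and departures in a single epoch can be treated as the independent product events above, and that the "first packet of a block arrives at $v_{h-1}$" conditioning does not skew the per-epoch arrival probability $P_{in}$ away from the stated value $\ol\ve_{h-1}(1-\sum_t P_{h-1}^{\emph{TF}}(t,t))$. This is exactly where assumptions 1--3 of Sec.~\ref{FB-sec2} are invoked, so I would state clearly that the lemma holds under those modeling assumptions and that $\mathbf{V}$ is \emph{defined} as the solution of this recursion; the remaining content is the verification that the recursion and its boundary conditions uniquely determine a valid probability distribution on the nonnegative integers, which the induction above supplies.
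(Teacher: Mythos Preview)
The paper omits its proof entirely (``The proof is omitted due to space limitations''), so there is nothing to compare against; your proposal is the natural first-step argument and is correct. One small clean-up: the four-event enumeration you give is valid only for $0<x<y$, not $0<x\leq y$, since at $x=y$ the node holds no packets of the block and a departure cannot occur --- you do treat $x=y$ separately as a boundary case, so just tighten the stated range when you write it up.
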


\begin{proof}
The proof is omitted due to space limitations.
\end{proof}

\section{Results of Simulation}\label{FB-sec4}

In this section, we compare our analytical results to the actual simulations. To study the effect of buffer size on throughput and block delay, we simulated a line network of eight hops for two cases where all the links have the same probability of erasure of $0.1$ or $0.2$. The buffer size $m$ (in packets) is divided into $M$ blocks of $K$ packets.
\begin{figure}[ht]
\centering
\small
\psfrag{xaxis1}{\small{$m$ (packets)}}
\psfrag{yaxis1}{\hspace{-15pt}Throughput}
\psfrag{xaxis2}{\small{$m$ (packets)}}
\psfrag{yaxis2}{\hspace{-15pt}Average Delay}
\psfrag{dddddddddddddddddddata1}{\tiny{Estimation ($\ve=0.1$)}}
\psfrag{data2}{\tiny{Simulation ($\ve=0.1$)}}
\psfrag{data3}{\tiny{Estimation ($\ve=0.2$)}}
\psfrag{data4}{\tiny{Simulation ($\ve=0.2$)}}
\psfrag{dddddddddddddddddddata5}{\tiny{Estimation ($\ve=0.1$)}}
\psfrag{data6}{\tiny{Simulation ($\ve=0.1$)}}
\psfrag{data7}{\tiny{Estimation ($\ve=0.2$)}}
\psfrag{data8}{\tiny{Simulation ($\ve=0.2$)}}
\epsfig{height=6cm,width=8.5cm, file=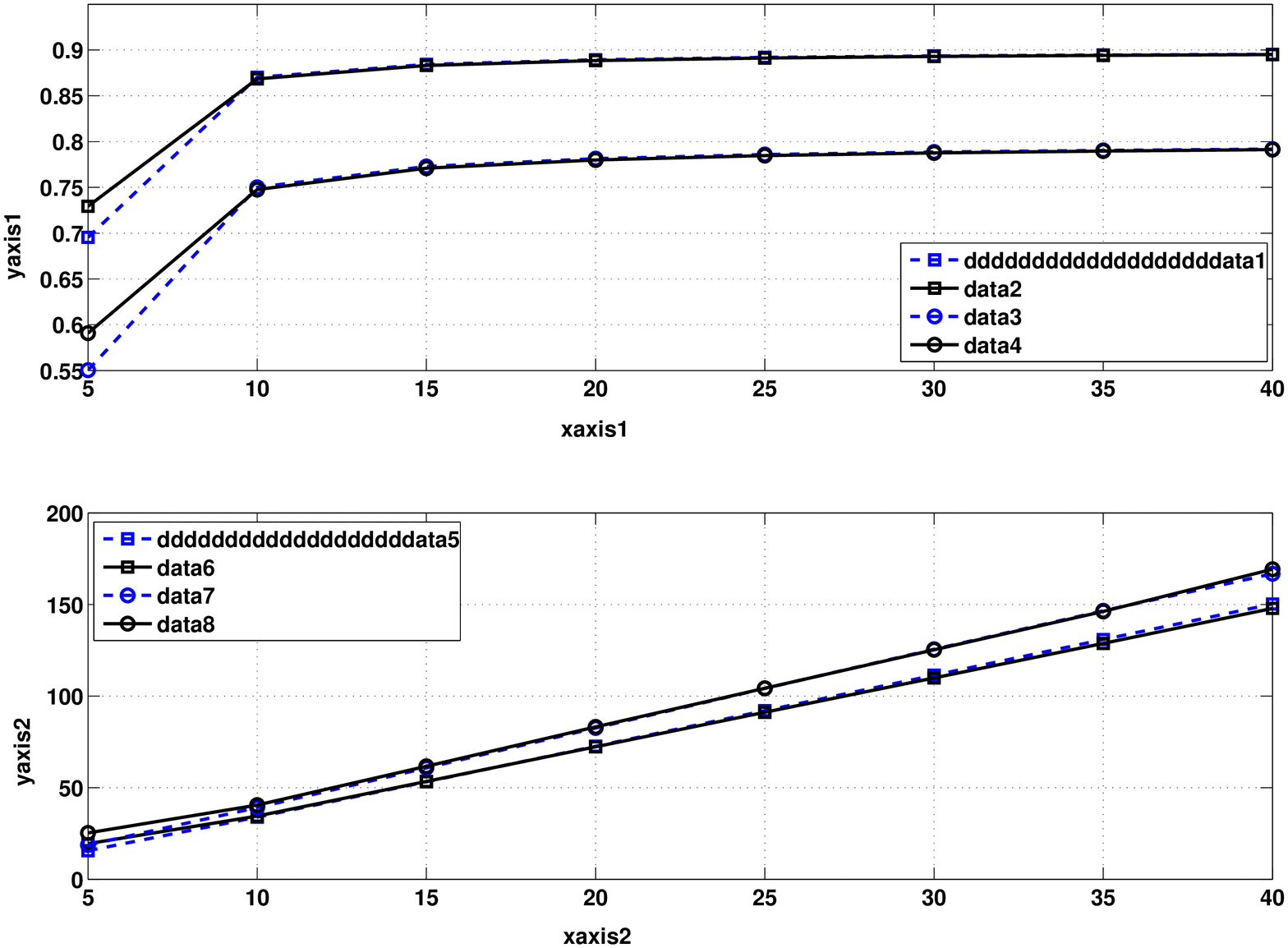}
\caption{Delay and throughput of an 8-hop line network as a function of $m$.}\label{FB-TPT-AD}
\end{figure}
Fig.~\ref{FB-TPT-AD} presents the variation of our analytical results and the actual simulations for both throughput and average delay of a block, as the buffer size $m$ of the intermediate nodes is varied while the block size is fixed at $K=5$ packets. It can be seen that as the buffer size is increased, average delay also increases linearly. It appears that above memory sizes of $10$, the gain in capacity is negligible, while the latency increases significantly. Hence, there is no need to allocate more storage to the flow even if the space is available in the router. Further, for buffer sizes of less than $10$ packets, there is a gap from the min-cut capacity, a diverging point from asymptotic results due to the finite buffer effect.
\begin{figure}[ht]
\psfrag{xaxis}{\small{\hspace{-20pt}Delay of a block (epochs)}}
\psfrag{yaxis}{\hspace{-15pt}Probability of Delay}
\psfrag{ddddddddddddddd1}{\tiny{Simulation ($m=16$)}}
\psfrag{d2}{\tiny{Estimation ($m=16$)}}
\psfrag{d3}{\tiny{Simulation ($m=24$)}}
\psfrag{d4}{\tiny{Estimation ($m=24$)}}
\psfrag{d5}{\tiny{Simulation ($m=32$)}}
\psfrag{d6}{\tiny{Estimation ($m=32$)}}
\epsfig{height=6cm,width=9.0cm,file=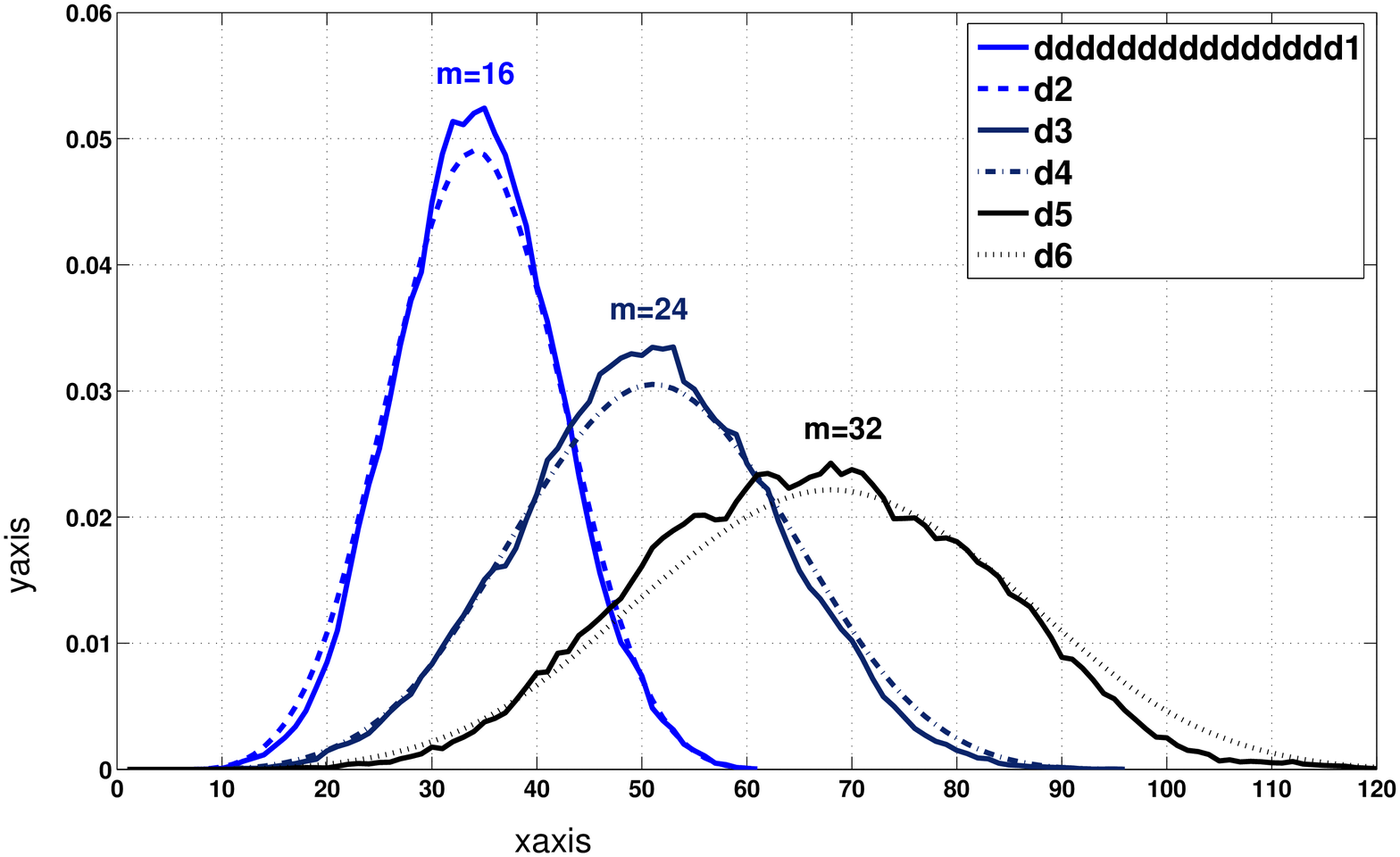}
\caption{Delay profiles of a 5-hop line network for varying buffer sizes ($K=4$).}\label{FB-DD}
\end{figure}
Fig.~\ref{FB-DD} presents a comparison between the actual and the estimated delay profile for a five-hop line network with the erasure probability on every link set to $0.05$. It also compares the delay profiles for different buffer sizes when $K = 4$. It is noticed that as the buffer size of the intermediate nodes is increased, both average delay and its standard deviation are increased. This is undesirable since any increase in the standard deviation of the delay can make congestion control algorithms unstable.

\bibliographystyle{ieeetr}
\bibliography{AsilomarRef}
\end{document}